\newtheorem{theorem}{\bf{Theorem}}[section]
\newtheorem{cor}[theorem]{Corollary}
\newtheorem{prop}[theorem]{Proposition}
\theoremstyle{plain}
\newenvironment{definition}[1][Definition]{\begin{trivlist}
\item[\hskip \labelsep {\bfseries #1}]}{\end{trivlist}}
\title{\LARGE \bf
A Tight Lower Bound on the Controllability of \\
Networks with Multiple Leaders
}
\author{A. Yasin Yazicioglu, Waseem Abbas, and Magnus Egerstedt\\
Electrical and Computer Engineering\\
Georgia Institute of Technology, Atlanta, GA 30332, USA\\
{\tt\small yasin@ece.gatech.edu, wabbas@gatech.edu, magnus@gatech.edu}
}
\begin{document}

\maketitle
\begin{abstract}
%\boldmathc
In this paper we study the controllability of networked systems with static network topologies using tools from algebraic graph theory. Each agent in the network acts in a decentralized fashion by updating its state in accordance with a nearest-neighbor averaging rule, known as the consensus dynamics. In order to control the system, external control inputs are injected into the so called leader nodes, and the influence is propagated throughout the network. Our main result is a tight topological lower bound on the rank of the controllability matrix for such systems with arbitrary network topologies and possibly multiple leaders.
\end{abstract}

%%%%%%%%%%%%%%%%%%%%%%%%%%%%%%%%%%%%%%%%%%%%%%%%%%%%%%%%%%%%%%%%%%%%%%%%%%%%%%%%

%%%%%%%%%%%%%%%%%%%%%%%%%%%%%%%%%%%%%%%%%%%%%%%%%%%%%%%%%%%%%%%%%%%%%%%%%%%%%%%%
\section{Introduction}
Decentralized control of networked multi-agent systems has received a considerable amount of attention during the last decade. Numerous applications of decentralized control laws have been studied including flocking (e.g., \cite{Jadbabaie07}), alignment and formation control (e.g., \cite{Fax02}-\cite{Lin04}), distributed estimation (e.g., \cite{Speranzon06}), sensor coverage (e.g., \cite{Cortes04}) and distributed control of robotic networks (e.g., \cite{Bullo09}), to name a few. In a distributed framework, a global task is achieved by the local interactions of agents among each other without a centralized control. Some tasks only require a common agreement between the agents, whereas others may ask for agents to achieve some definite configuration in terms of their defined states. For such tasks, a fundamental question is whether such a decentralized system can be controlled by directly manipulating only some of the agents. This question motivates our analysis of the controllability of networked systems. 

Controllability of networked systems was initially addressed in \cite{Tanner04}, where a connection between the spectral properties of the underlying graph modelling a network, and the controllability of the system was analyzed. A more topological analysis of the problem was later presented in \cite{Rahmani06} with an emphasis on how the symmetry with respect to the leader node affects the controllability of the system. More general conditions were presented in \cite{Ji07,Martini08} by introducing equitable partitions in the analysis. These concepts were extended along with additional results in \cite{Rahmani09}. In \cite{Egerstedt10}, these equitable partitions were used to obtain an upper bound on the rank of the controllability matrix. Recently, distance partitions are used in \cite{Zhang11} to obtain a lower bound on the rank of the controllability matrix for single-leader networks. 
%
%In the literature, there are also studies that consider only certain types of graphs such as cycles, paths, or trees, and analyze their controllability \cite{}. These studies generally present stronger conditions that are only applicable to the corresponding topologies.

In this paper we analyse leader-follower networks in which the agents utilize a nearest-neighbor averaging rule. Some agents, called the \textit{leaders,} support external control inputs that ultimately influence the dynamics of all other agents namely \textit{followers} by spreading throughout the network. We explore the controllability of the overall  system under this setting. Our main result is a topological lower bound on the rank of the controllability matrix for any graph structure with multiple leaders. This lower bound is based on distances of nodes to the control nodes, and it can easily be computed directly from the network topology, without having to rely on any rank test or spectral analysis of the graph. This problem was studied for single-leader networks in \cite{Zhang11}, and a lower bound was obtained using the distance partition with respect to the leader. In this work, we tackle the general problem with possibly multiple leaders by extending the use of distance based relationships to such cases.

The organization of this paper is as follows: Section \ref{prelim} presents some preliminaries related to the system dynamics and algebraic graph theory. In Section \ref{controllability}, we present our controllability analysis. Section \ref{compute} provides an algorithm to compute the proposed lower bound on the rank of the controllability matrix for arbitrary networks. Finally, Section \ref{conclusion} provides the concluding remarks.

\section{ Preliminaries}
\label{prelim}
Consider a networked system of $n$ agents that utilize the same nearest neighbor averaging rule, known as the consensus equation, to govern their dynamics. For each particular agent $i$, the consensus equation is given as 
\begin{equation}
\label{consensus}
\dot{x}_i=\sum_{j\in \mathcal{N}_i}(x_j-x_i),
\end{equation}
where $x_i$ is the state of agent $i$, and  $\mathcal{N}_i$ is the set of agents neighboring agent $i$. Without loss of generality, let us assume that $x_i \in \mathbb{R}$, and the interactions among the agents are encoded via a static undirected graph $\mathcal{G}=(V,E)$. In this graph, each node in the node set, $V=\{1,2,\hdots,n\}$, corresponds to a particular agent, and the edge set, $E \subseteq V \times V$, is the set of unordered pairs $(i,j)$ depicting that the nodes $i$ and $j$ are neighbors. In this context, neighbor nodes are the ones that have the measurements of each other's states. 

The consensus equation provides a simple, yet powerful foundation for decentralized control strategies that can be utilized in various tasks, including coverage control, containment control, distributed filtering, flocking and formation control. With all agents utilizing the consensus equation, their states asymptotically converge to the stationary mean,% of the states of all agents in the network% 
if and only if the underlying graph is connected \cite{Jadbabaie03}. 

Assume that we would like to control this network simply by applying external control signals to some of the nodes. Without loss of generality, let the first $m$ nodes be the  leaders taking the external control inputs, and let the remaining $(n-m)$ nodes be the followers whose dynamics are governed by (\ref{consensus}). Let the $m$ dimensional control input be represented by vector $u$. Then, the dynamics of the leader nodes satisfy
\begin{equation}
\label{control}
\dot{x}_{i}=\sum_{j\in \mathcal{N}_{i}}(x_j-x_{i})+[u]_i, \mbox{ for $i=1,2\hdots,m$.} 
\end{equation}
where, $[u]_i$ denotes the $i^{th}$ entry of the control vector $u$. When the external control signals are applied to the leader nodes, their effect on the dynamics propagates to the rest of the nodes through the underlying network. 

Our main goal here is to characterize the controllability of the overall system under this setting. In particular, we are interested in the dimension of the controllable subspace, and aim to relate it to the topology of the underlying network from a purely graph theoretic perspective. To this end, we use some basic tools from algebraic graph theory, in particular the degree matrix, the adjacency matrix, and the graph Laplacian.

Let $\Delta$ be the $n \times n$ degree matrix associated with the graph. The entries of $\Delta$ are given as
\begin{equation}
\label{Deg}
[\Delta]_{ij}=\left\{\begin{array}{ll}|\mathcal{N}_i|&\mbox{ if } 
i=j\\0&\mbox{ otherwise, }\end{array}\right.
\end{equation}
where $|\mathcal{N}_i|$ denotes the cardinality of  $\mathcal{N}_i$, and it is equal to the number of neighbors of node $i$.

The adjacency matrix, $\mathcal{A}$, is an $n \times n$ symmetric matrix with its entries given as

\begin{equation}
\label{Adj}
[\mathcal{A}]_{ij}=\left\{\begin{array}{ll}1&\mbox{ if } 
(i,j)\in E \\ 0&\mbox{ otherwise. }\end{array}\right.
\end{equation}

The graph Laplacian, $L$, is simply given as the difference of the degree and the adjacency matrices,
\begin{equation}
\label{Laplacian}
 L = \Delta-\mathcal{A}. 
\end{equation}

In light of (\ref{consensus}) and (\ref{control}), the dynamics of the leader-follower network with $m$ leaders can be given as 
\begin{equation}
\label{sys}
\dot{x}=-Lx+Bu,
\end{equation}
where $x=[x_1, x_2, \hdots, x_n]^T$ is the state vector obtained by stacking the states of each individual node, and $B$ is an $n \times m$ matrix  with the following entries
\begin{equation}
\label{Beq}
[B]_{ij}=\left\{\begin{array}{ll}1&\mbox{ if $i=j $} 
 \\ 0&\mbox{ otherwise. }\end{array}\right.
\end{equation}
Note that (\ref{sys}) represents a standard linear time-invariant system and it relates the system dynamics to the graph topology through the graph Laplacian.  

\section{Controllability of Leader-Follower Networks}
\label{controllability}
In this section we will analyse the controllability of the system given in (\ref{sys}). %Our analysis relates the topology of the interaction graph to the controllability of such systems. 
In particular, we present relationships between the network topology and the rank of the controllability matrix for such systems. We start this section by referring to the results based on the equitable partitions presented in \cite{Martini08,Egerstedt10}. 

A \emph{partition} of a graph $\mathcal{G}=(V,E)$ is given by a mapping $\pi : V \rightarrow \{C_1,C_2,\hdots,C_r\}$, where $\pi(i)$ denotes the cell that node i gets mapped to, and we use $dom(\pi)$ to denote the domain to which $\pi$ maps, i.e., $dom(\pi) = \{C_1,C_2,\hdots,C_r\}$.

\begin{definition}
(External Equitable Partition): A partition $\pi$ of a graph $\mathcal{G}$ with cells $C_1,C_2, \hdots, C_r$ is said to be an \emph{external equitable partition} (EEP) if each node in cell $C_i$ has the same number of neighbors in cell $C_j$ for every $i\neq j$.
\end{definition}

In the controllability analysis, we are particularly interested in the \emph{maximal leader-invariant} EEP of a graph. An EEP is said to be \emph{leader-invariant} if the leader nodes are mapped to singleton cells, and such a mapping is said to be \emph{maximal} if no other leader-invariant EEP with fewer number of cells exists. Note that for any graph $\mathcal{G}$ there is a unique maximal leader-invariant EEP, $\pi^*$. Examples of maximal leader-invariant EEPs are depicted in Fig. \ref{EEPfig}.

\begin{figure}[htb]
\centerline{\epsfig{figure=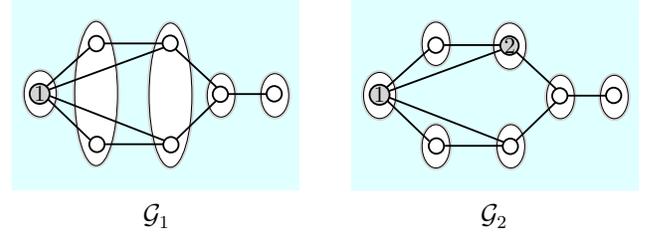,height=1.2in,width=3.3in}}
\caption{\label{fig}Maximal leader-invariant EEP's for two networks, $\mathcal{G}_1$ and $\mathcal{G}_2$. $\mathcal{G}_1$ has a single leader namely $1$, whereas $\mathcal{G}_2$ has two leaders namely $1$ and $2$.}
\label{EEPfig}
\end{figure}

Maximal leader-invariant EEPs are useful structures in the controllability analysis since the states of the nodes that appear in the same cell of the maximal leader-invariant EEP asymptotically converge to the same value \cite{Martini08}.
\begin{theorem}
\label{collapse} \cite{Martini08}
 If $\mathcal{G}$ is a connected graph with $\pi^*$ being its maximal leader-invariant EEP, then for all $C_i \in dom(\pi^*)$
\begin{equation}
\label{LEPlimit}
 \lim_{t \to \infty}(x_k(t)-x_l(t))=0, \forall k,l \in \pi^{*-1}(C_i).
 \end{equation}
\end{theorem}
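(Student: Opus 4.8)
The plan is to translate the partition into linear algebra via its characteristic matrix and then exploit the symmetry of the Laplacian. Let $P \in \mathbb{R}^{n \times r}$ be the characteristic matrix of $\pi^*$, whose column associated with cell $C_i$ is the indicator vector of $\pi^{*-1}(C_i)$, so that $\text{col}(P)$ is exactly the set of state vectors that are constant on every cell. The first and most important step is to show that $\text{col}(P)$ is invariant under $L$, i.e. that there is an $r \times r$ quotient matrix $\hat{L}$ with $LP = P\hat{L}$. For a node $v \in \pi^{*-1}(C_i)$ and a cell $C_j$ with $j \neq i$, the corresponding entry of $LP$ equals $-b_{ij}$, the number of neighbors $v$ has in $C_j$, which is constant over $C_i$ by the EEP property. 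The diagonal entry equals $\deg(v) - b_{ii}(v)$, and even though the internal degree $b_{ii}(v)$ may vary within $C_i$, the identity $\deg(v) = b_{ii}(v) + \sum_{j \neq i} b_{ij}$ shows it equals $\sum_{j \neq i} b_{ij}$, again constant. Hence every column of $LP$ is constant on cells, giving $LP = P\hat{L}$.

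Next I would record two structural facts. First, because $L$ is symmetric and $\text{col}(P)$ is $L$-invariant, the orthogonal complement $\mathcal{W}^{\perp} := \text{col}(P)^{\perp}$ is also $L$-invariant. Second, since the partition is leader-invariant each leader forms a singleton cell, so every column $e_i$ of $B$ is itself a column of $P$; thus $\text{col}(B) \subseteq \text{col}(P)$, and in particular $Bu \perp \mathcal{W}^{\perp}$ for every input $u$. I would also note that the all-ones vector satisfies $\mathbf{1} \in \text{col}(P)$ (the columns of $P$ sum to $\mathbf{1}$), and that connectedness of $\mathcal{G}$ gives $\ker L = \text{span}\{\mathbf{1}\} \subseteq \text{col}(P)$.

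With these in hand, project the trajectory onto $\mathcal{W}^{\perp}$ by setting $z = \Pi\, x$, where $\Pi$ is the orthogonal projection onto $\mathcal{W}^{\perp}$. Differentiating and using $\Pi Bu = 0$ (from $\text{col}(B)\subseteq\text{col}(P)$) together with the $L$-invariance of $\mathcal{W}^{\perp}$ (which makes the component of $x$ lying in $\text{col}(P)$ invisible to $\Pi L$), the forced, coupled dynamics collapse to the autonomous equation $\dot{z} = -L z$ evolving entirely inside $\mathcal{W}^{\perp}$. Restricted to $\mathcal{W}^{\perp}$ the operator $L$ is symmetric, positive semidefinite, and has trivial kernel (because $\ker L \subseteq \text{col}(P)$), hence positive definite; therefore $-L|_{\mathcal{W}^{\perp}}$ is Hurwitz and $z(t) \to 0$ as $t \to \infty$. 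Since $x(t) - z(t)$ lies in $\text{col}(P)$ and is thus constant on each cell, for any $k, l \in \pi^{*-1}(C_i)$ we obtain $x_k(t) - x_l(t) = z_k(t) - z_l(t) \to 0$, which is the claim.

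I expect the main obstacle to be the first step, the verification that the \emph{external} equitable condition already forces $LP = P\hat{L}$. The subtlety is that an EEP constrains only the cross-cell neighbor counts and leaves the internal degrees $b_{ii}(v)$ free, so one might worry that $\text{col}(P)$ fails to be $L$-invariant; the resolution is that the Laplacian's diagonal carries precisely the full degree, and the uncontrolled internal-degree term cancels against the adjacency contribution, leaving only the cross-cell counts. Everything afterward is a standard invariant-subspace and spectral argument, with connectedness entering solely to guarantee $\ker L = \text{span}\{\mathbf{1}\}$.
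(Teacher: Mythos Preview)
The paper does not supply its own proof of this theorem; it is quoted verbatim from \cite{Martini08} and used as a black box to motivate the upper bound in Theorem~\ref{upperbound}. So there is no in-paper argument to compare against.

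Your argument is correct and is essentially the standard one in this literature: encode $\pi^*$ by its characteristic matrix $P$, verify $LP=P\hat{L}$ (the cancellation $\deg(v)-b_{ii}(v)=\sum_{j\neq i}b_{ij}$ is exactly the point that makes the \emph{external} equitable condition sufficient for the Laplacian, as you note), use symmetry of $L$ to get invariance of $\mathrm{col}(P)^{\perp}$, observe that leader-invariance forces $\mathrm{col}(B)\subseteq\mathrm{col}(P)$ so the input disappears after projection, and then invoke connectedness to make $-L$ Hurwitz on $\mathrm{col}(P)^{\perp}$. Each step is sound, and your identification of the $LP=P\hat{L}$ verification as the only nontrivial point is accurate. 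One small remark: you never use maximality of $\pi^*$, and indeed the conclusion holds for any leader-invariant EEP; maximality matters only for the tightness of the upper bound in Theorem~\ref{upperbound}, not for the convergence statement itself.
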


In light of Theorem \ref{collapse}, %we know that the states of the nodes that appear in the same cell of $\pi^*$ asymptotically converge to the same value, and their difference is uncontrollable. Hence, 
one can at most be able to control all of the average state values within each cell of the $\pi^*$. Hence, the cardinality of $dom(\pi^*)$ provides an \emph{upper bound} on the rank of the controllability matrix as given in \cite{Egerstedt10}.

\begin{theorem}
\label{upperbound} 
 \cite{Egerstedt10}
Let $\mathcal{G}$ be a connected network, and $\pi^*$ denote its maximal leader-invariant EEP. Given the dynamics in (\ref{sys}), the rank of the controllability matrix, $\Gamma$, satisfies
\begin{equation}
\label{ubound}
rank(\Gamma)\leq |\pi^*|,
\end{equation}
where $|\pi^*|$ is the cardinality of $dom(\pi^*)$.
\end{theorem}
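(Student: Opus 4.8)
The plan is to prove Theorem~\ref{upperbound} by leveraging Theorem~\ref{collapse} to show that the controllable subspace is confined to a subspace whose dimension equals $|\pi^*|$. The key observation is that Theorem~\ref{collapse} tells us that, in the limit, the states of all nodes within a single cell $C_i$ of the maximal leader-invariant EEP coincide. This suggests that the system, from the perspective of what can be distinguished and controlled, behaves as a quotient system on the cells rather than on the individual nodes. My strategy is to make this intuition rigorous by exhibiting an appropriate invariant subspace (equivalently, a set of linear functionals annihilated by the dynamics and the input) and then counting dimensions.

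First I would introduce the \emph{characteristic matrix} $P$ of the partition $\pi^*$, an $n \times |\pi^*|$ matrix whose columns are the indicator vectors of the cells; its columns span the subspace of vectors that are constant on each cell. The defining property of an EEP is precisely an algebraic relation of the form $LP = P\tilde{L}$ for some $|\pi^*| \times |\pi^*|$ \emph{quotient Laplacian} $\tilde{L}$, expressing that $L$ leaves the column space of $P$ invariant (this is the standard matrix characterization of an external equitable partition, and it follows directly from the cell-counting definition). Because the leaders are in singleton cells, the columns of $B$ also lie in $\mathrm{range}(P)$, so one can write $B = P\tilde{B}$ for a suitable $\tilde{B}$. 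Next I would argue that $\mathrm{range}(P)$ is an $(-L, B)$-invariant subspace containing the image of $B$: since $B u \in \mathrm{range}(P)$ and $L$ maps $\mathrm{range}(P)$ into itself, every reachable state, being generated by repeated application of $-L$ to vectors in $\mathrm{range}(B)$, remains inside $\mathrm{range}(P)$. The controllability matrix $\Gamma = [\,B \;\; -LB \;\; \cdots \;\; (-L)^{n-1}B\,]$ therefore has all its columns in $\mathrm{range}(P)$, whence $\mathrm{rank}(\Gamma) \le \dim \mathrm{range}(P) = |\pi^*|$.

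The step I expect to be the main obstacle is establishing the invariance relation $LP = P\tilde{L}$ cleanly and confirming that it is the \emph{external} equitability (same number of neighbors in \emph{other} cells) rather than full equitability that suffices. The subtlety is that an external equitable partition does not in general force the diagonal (within-cell) structure of $L$ to respect the partition, so one must verify that the off-diagonal row sums encoding inter-cell adjacencies are constant on cells while checking that the degree terms do not spoil the identity; here the singleton-leader and connectivity hypotheses, together with the $-Lx + Bu$ form, ensure the bookkeeping goes through. Once the invariant-subspace relation is secured, the dimension count is immediate, and I would close by noting that this bound is consistent with the convergence statement of Theorem~\ref{collapse}, since the quotient dynamics $\dot{\tilde{x}} = -\tilde{L}\tilde{x} + \tilde{B}u$ govern exactly the cell-average coordinates that remain controllable.
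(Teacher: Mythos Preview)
The paper does not actually prove Theorem~\ref{upperbound}; it is quoted as a known result from \cite{Egerstedt10} (and the underlying machinery appears in \cite{Martini08,Rahmani09}). So there is no ``paper's own proof'' to compare against. Your proposed argument via the characteristic matrix $P$ and the intertwining relation $LP=P\tilde L$ is exactly the standard route used in those references, and it is correct.

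Your one stated worry---whether \emph{external} equitability suffices for $LP=P\tilde L$ when the within-cell neighbor counts are unconstrained---resolves cleanly once you write it out. For $v\in C_i$, the $(v,C_i)$ entry of $LP$ is
\[
\sum_{w\in C_i} L_{vw}
= \deg(v)-|\{w\in C_i:(v,w)\in E\}|
= \sum_{j\neq i}|\{w\in C_j:(v,w)\in E\}|,
\]
and the right-hand side is constant over $v\in C_i$ precisely by external equitability. The off-diagonal blocks are immediate. So the within-cell degree variation cancels automatically in the Laplacian; this is the special feature of $L$ (as opposed to $\mathcal A$) that makes external equitability enough. Neither connectivity nor the leader-singleton hypothesis is needed for $LP=P\tilde L$; the leader-singleton assumption is used only for $B=P\tilde B$, and connectivity plays no role in the upper bound itself (it matters for Theorem~\ref{collapse}, not here). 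With those clarifications your argument is complete: every column of $\Gamma$ lies in $\mathrm{range}(P)$, hence $\mathrm{rank}(\Gamma)\le|\pi^*|$.
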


The upper bound given in Theorem \ref{upperbound} is quite useful in analyzing the controllability of a leader-follower network. For instance, one can conclude that a system is not completely controllable if there exists non-singleton cells in its maximal leader-invariant EEP. However, all the cells being singletons does not necessarily imply that the network is completely controllable.  
 
Next, we present our main result, a \emph{lower bound} on the rank of the controllability matrix when multiple leaders are present. In \cite{Zhang11}, the authors present a lower bound for single-leader networks. To this end, they utilize the distance partition of an underlying graph with respect to its leader. In this partition all the nodes that are at the same distance from the leader are mapped into a single cell. It is shown there that the rank of the controllability matrix is greater than or equal to the number of cells in this partition. 
\begin{theorem}
\label{singlelower} 
 \cite{Zhang11}
Let $\mathcal{G}$ be a connected single-leader network, and $\pi_D$ denotes its distance partition with respect to the leader. Given the dynamics in (\ref{sys}), the rank of the controllability matrix, $\Gamma$, satisfies
\begin{equation}
\label{ubound}
|\pi_D| \leq rank(\Gamma),
\end{equation}
where $|\pi_D|$ is the number of cells in the distance partition.
\end{theorem}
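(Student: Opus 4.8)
The plan is to exhibit inside the controllable subspace an explicit family of $|\pi_D|$ linearly independent vectors, built from the powers of the Laplacian acting on the leader's indicator vector, and to read off their independence from the sparsity pattern of those powers. Since the network has a single leader, say node $1$, the input matrix reduces to $B = e_1$, and the controllable subspace is the Krylov subspace $\mathrm{span}\{e_1, (-L)e_1, (-L)^2 e_1, \ldots\}$; negating factors does not change the span, so $\mathrm{rank}(\Gamma) = \dim \mathrm{span}\{L^k e_1 : k \geq 0\}$, and it suffices to show that the first $|\pi_D|$ of the vectors $v_k := L^k e_1$ are independent.

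First I would record the structural fact linking graph distance to powers of $L$. Writing $L = \Delta - \mathcal{A}$ and expanding $L^k = (\Delta - \mathcal{A})^k$ as a sum of products of $k$ factors, each equal to $\Delta$ or $-\mathcal{A}$, the $(i,1)$ entry becomes a signed count of length-$k$ sequences from $i$ to $1$ in which a $\Delta$-factor forces a stay (it is diagonal) and an $\mathcal{A}$-factor forces a move along an edge. Hence a nonzero term requires at least $d(i,1)$ edge-moves within $k$ steps, so $\mathrm{supp}(v_k) \subseteq \{i : d(i,1) \leq k\} = C_0 \cup \cdots \cup C_k$, where $C_j$ is the distance cell at distance $j$ from the leader. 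Conversely, for a node $i$ at distance exactly $k$ every length-$k$ walk to $1$ must use an edge-move at each step, so only the pure $(-\mathcal{A})^k$ term survives and $[L^k]_{i1} = (-1)^k [\mathcal{A}^k]_{i1}$, which equals $(-1)^k$ times the positive number of shortest $i$--$1$ paths. Thus $v_k$ has a nonzero entry on every node of $C_k$.

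With this in hand, independence follows by a staircase argument. Let $\delta$ be the eccentricity of the leader, so $\pi_D = \{C_0, \ldots, C_\delta\}$ and $|\pi_D| = \delta + 1$. Each $v_k$ vanishes outside $C_0 \cup \cdots \cup C_k$ yet is nonzero on the new cell $C_k$, which is disjoint from the supports of $v_0, \ldots, v_{k-1}$. Given any dependence $\sum_k c_k v_k = 0$, restricting to the highest-index cell that appears forces that coefficient to vanish, and repeating downward annihilates all coefficients. Hence $v_0, \ldots, v_\delta$ are independent, they lie in the controllable subspace, and $\mathrm{rank}(\Gamma) \geq \delta + 1 = |\pi_D|$.

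The main obstacle is the structural claim that $[L^k]_{i1} \neq 0$ for every node at distance exactly $k$, that is, ruling out cancellation. This is precisely where the decomposition $L = \Delta - \mathcal{A}$ is essential: one must argue that any mixed term involving a $\Delta$-factor cannot reach a distance-$k$ node in $k$ steps, since such a factor wastes a step standing still, leaving fewer than $d(i,1)$ available moves. The only surviving contribution is then the single-signed walk count $(-1)^k[\mathcal{A}^k]_{i1}$, which carries no cancellation because shortest-path counts are strictly positive. Once nonvanishing on each $C_k$ is secured, the disjoint-support staircase makes the independence, and hence the bound, immediate.
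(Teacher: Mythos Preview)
Your proof is correct and mirrors the paper's own approach. Although Theorem~\ref{singlelower} itself is merely cited from \cite{Zhang11} without proof, the paper's Proposition~\ref{prop1} establishes precisely your structural claim---that $[(-L)^r e_1]_i$ vanishes for $r<d(i,1)$ and equals $[\mathcal{A}^r]_{i1}\neq 0$ for $r=d(i,1)$---via the same expansion of $(\mathcal{A}-\Delta)^r$ and the same non-cancellation argument (mixed terms with a $\Delta$-factor cannot reach a distance-$r$ node in $r$ steps), after which Theorem~\ref{lbound} uses the identical staircase argument to exhibit $|\pi_D|$ independent columns of $\Gamma$.
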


Similar to the single-leader case, the distances of nodes from the leaders appear as the fundamental property in our analysis. We start our analysis with the following proposition. 

\begin{prop}
\label{prop1} Let $\mathcal{G}=(V,E)$ be a connected network with the dynamics in (\ref{sys}), and let $b_k$ be the $k^{th}$ column of the input matrix $B$. Then, for any node $i$ and leader $k$,
\begin{equation}
\label{propeq}
[(-L)^r b_k]_{i}=\left\{\begin{array}{ll}0&\mbox{ if $0\leq r < d_{ik} $} 
 \\ \mbox{$[\mathcal{A}^{r}]_{ik}$}&\mbox{ if $r = d_{ik} $ }\end{array}\right.
\end{equation}
where $L$ is the graph Laplacian, $\mathcal{A}$ is the adjacency matrix of the graph, and $d_{ik}$ is the distance of node $i$ to the leader node $k$. \end{prop}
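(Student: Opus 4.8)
The plan is to first observe that, by the definition of $B$ in (\ref{Beq}), the $k$th column $b_k$ is the vector whose only nonzero entry is a $1$ in position $k$. Consequently $[(-L)^r b_k]_i = [(-L)^r]_{ik}$, so the entire statement is really a claim about the $(i,k)$ entries of the matrix powers of $-L$. I would also recall the standard combinatorial fact that $[\mathcal{A}^r]_{ik}$ equals the number of walks of length exactly $r$ from $i$ to $k$; in particular $[\mathcal{A}^r]_{ik}=0$ whenever $r<d_{ik}$, since no walk can be shorter than the graph distance. With this, the two branches of (\ref{propeq}) are subsumed by the single unified claim that $[(-L)^r]_{ik}=[\mathcal{A}^r]_{ik}$ for every $r$ with $0\le r\le d_{ik}$.

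I would prove this unified claim by induction on $r$, holding the leader $k$ fixed and quantifying over all nodes. The base case $r=0$ is immediate, since $(-L)^0=\mathcal{A}^0=I$. For the inductive step I would write $(-L)^r=(-L)(-L)^{r-1}$ and expand entrywise using $-L=\mathcal{A}-\Delta$ from (\ref{Laplacian}), which, since $\Delta$ is diagonal with $[\Delta]_{ii}=|\mathcal{N}_i|$, yields
\begin{equation*}
[(-L)^r]_{ik} = \sum_{j} [\mathcal{A}]_{ij}\,[(-L)^{r-1}]_{jk} \;-\; |\mathcal{N}_i|\,[(-L)^{r-1}]_{ik}.
\end{equation*}

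The two terms are then handled separately. For the diagonal (degree) term, I would note that a node $i$ with $d_{ik}\ge r$ satisfies $d_{ik}>r-1$, so there is no walk of length $r-1$ from $i$ to $k$ and $[\mathcal{A}^{r-1}]_{ik}=0$; by the inductive hypothesis $[(-L)^{r-1}]_{ik}=[\mathcal{A}^{r-1}]_{ik}=0$, so this term vanishes. For the adjacency term, the sum effectively runs only over the neighbors $j$ of $i$, and the triangle inequality $d_{ik}\le 1+d_{jk}$ gives $d_{jk}\ge d_{ik}-1\ge r-1$, so each such $j$ meets the hypothesis and $[(-L)^{r-1}]_{jk}=[\mathcal{A}^{r-1}]_{jk}$. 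Substituting, the remaining sum collapses to $\sum_j [\mathcal{A}]_{ij}[\mathcal{A}^{r-1}]_{jk}=[\mathcal{A}^r]_{ik}$, completing the step.

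The step I expect to be the crux is the cancellation of the degree term: the whole argument hinges on recognizing that on the frontier $r\le d_{ik}$ the entry $[(-L)^{r-1}]_{ik}$ is already zero, so the non-commuting diagonal part $\Delta$ contributes nothing and $(-L)^r$ reproduces $\mathcal{A}^r$ exactly on these entries. The other point requiring care is keeping the inductive hypothesis quantified over \emph{all} nodes rather than only the fixed $i$, so that it can legitimately be applied to the neighbors appearing in the adjacency term.
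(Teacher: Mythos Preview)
Your proof is correct, but it takes a genuinely different route from the paper. The paper expands $(-L)^r=(\mathcal{A}-\Delta)^r$ as a noncommutative ``binomial'' sum $\mathcal{A}^r+\sum_{m=0}^{r-1}(-1)^{r-m}\mathcal{S}_m$, where each $\mathcal{S}_m$ collects all products containing exactly $m$ factors of $\mathcal{A}$ and $r-m$ factors of $\Delta$; it then argues that left or right multiplication by the positive diagonal matrix $\Delta$ preserves the zero pattern of a nonnegative matrix, so $[\mathcal{S}_m]_{ik}=0$ whenever $[\mathcal{A}^m]_{ik}=0$, which kills every cross term for $r\le d_{ik}$. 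Your argument replaces this global expansion with a one-step recursion $(-L)^r=(\mathcal{A}-\Delta)(-L)^{r-1}$ and an induction on $r$, using the triangle inequality to ensure the hypothesis applies to every neighbor. Your approach is cleaner and avoids having to reason about all $2^r$ words in $\mathcal{A}$ and $\Delta$; the paper's expansion, on the other hand, makes more transparent \emph{why} the diagonal part contributes nothing---each cross term is visibly tied to some $[\mathcal{A}^m]_{ik}$ with $m<d_{ik}$. Both routes establish the same unified identity $[(-L)^r]_{ik}=[\mathcal{A}^r]_{ik}$ for $r\le d_{ik}$, from which the two cases of the proposition follow immediately.
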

\begin{proof}
\newline 
Using the equality in (\ref{Laplacian}), $(-L)^r$ can be expanded as
 \begin{equation}
\label{A_power}
(-L)^r=(\mathcal{A}-\Delta)^r=\mathcal{A}^r+\sum_{m=0}^{r-1}(-1)^{r-m}\mathcal{S}_{m},
\end{equation}
where $\mathcal{S}_{m}$ is the sum of all matrices that can be represented as a multiplication in which $\mathcal{A}$ appears $m$ times and $\Delta$ appears $r-m$ times. 
%Note that since $\Delta_f$ is a positive definite diagonal matrix, and $A_f$ has only non-negative entries, spreading powers of $\Delta$ in between powers of $A$ does not change the zeros resulting matrix. Hence, the following equation is satisfied:
Note that since $\Delta$ and $\mathcal{A}$ have only non-negative entries, any matrix that can be represented this way has only non-negative entries. Moreover, since $\Delta$ is a diagonal matrix with positive entries on the main diagonal, it doesn't add or remove zeros when multiplied by a matrix. Hence, $\mathcal{S}_{m}$ has  zeros only at the same locations as $\mathcal{A}^{m}$, and the following condition is satisfied:
 \begin{equation}
\label{SA}
[\mathcal{S}_{m}]_{ik}=0 \iff [\mathcal{A}^{m}]_{ik}=0.
\end{equation}
Using (\ref{Beq}) and (\ref{A_power}), the $i^{th}$ entry of the vector $(\mathcal{A}-\Delta)^rb_k$ can be expressed as follows:
 \begin{eqnarray}
\label{A_kb_i}
[(\mathcal{A}-\Delta)^rb_k]_i&=&[(\mathcal{A}-\Delta)^{r}]_{ik} \nonumber \\
&=&[\mathcal{A}^{r}]_{ik} +\sum_{m=0}^{r-1}(-1)^{r-m} [S_{m}]_{ik}.
\end{eqnarray}
As $\mathcal{A}$ is the adjacency matrix of the graph, $[\mathcal{A}^{r}]_{ik}$ is equal to the number of paths of length $r$ from node $i$ to node $k$. Since the distance of node $i$ to the leader node $k$ is $d_{ik}$, $[\mathcal{A}^{r}]_{ik}=0$ for all $0\leq r<d_{ik}$. Hence, (\ref{SA}) and (\ref{A_kb_i}) together imply that $[(\mathcal{A}-\Delta)^rb_k]_i=0$ for all $0\leq r<d_{ik}$. Furthermore, plugging $r=d_{ik}$ into (\ref{A_kb_i}), we get
 \begin{eqnarray}
\label{A_lb_i}
[(\mathcal{A}-\Delta)^{d_{ik}}b_k]_i=[\mathcal{A}^{d_{ik}}]_{ik},
\end{eqnarray}
where $[\mathcal{A}^{d_{ik}}]_{ik}$ is equal to the number of paths with the shortest length, $d_{ik}$, from node $i$ to the leader node $k$, and for a connected graph it is non-zero. 
 
 \end{proof}
   
In a network with $m$ leaders, for each node $i$ we can define an $m$ dimensional distance vector, $d_i$, that contains the distance of node $i$ to each of the leaders as
 \begin{equation}
\label{d_vec}
d_i=\left[ \begin{array}{ccccc}
 d_{i1}  & d_{i2}  &  \hdots & d_{im}\end{array}
\right]^T.
\end{equation} 
 
 In our controllability analysis, we utilize the sequences of these distance vectors, $D = \left(d^1, d^2, \hdots, d^{|D|}\right)$, where $|D|$ denotes the length of sequence $D$. In this representation, we drop the lower indices corresponding to the node labels, and use the super indices to denote the order of the particular vector in the sequence. In particular, we are interested in the sequences, $D$, defined by the following rule:

\textit{Rule}: For every $d^p \in D$, there exists an index, $k_p\in \{1,2,\cdots,m\}$ (where $m$ is the number of leaders), satisfying
\begin{equation}
\label{rule}
 [d^q]_{k_p}>[d^p]_{k_p}, \mbox{  $\forall q>p$.}  
\end{equation}

\textit{Example:} Consider a set of six vectors,
$$
\left\{\left[\begin{array}{cc}0\\3\\\end{array}\right] ,  \left[\begin{array}{cc}1\\2\\\end{array}\right], \left[\begin{array}{cc}1\\3\\\end{array}\right], \left[\begin{array}{cc}2\\1\\\end{array}\right], \left[\begin{array}{cc}2\\2\\\end{array}\right], \left[\begin{array}{cc}3\\0\\\end{array}\right]\right\}
$$

A vector sequence satisfying the rule in (\ref{rule}) can be

$$
D = \left(\left[\begin{array}{cc}\textcircled{0}\\3\\\end{array}\right] ,  \left[\begin{array}{cc}3\\\textcircled{0}\\ \end{array}\right], \left[\begin{array}{cc}2\\\textcircled{1}\\ \end{array}\right], \left[\begin{array}{cc}\textcircled{1}\\2\\ \end{array}\right], \left[\begin{array}{cc} \textcircled{2}\\2\\\end{array}\right]\right).
$$
For each vector $d^p$ in this sequence, the index $k_p$ satisfying the rule in (\ref{rule}) is marked with a circle. 
Note that here $d^1= \left[\begin{array}{cc}0\\3\\\end{array}\right]$  and $k_1 = 1$, as the first element of all other vectors $d^q$, where $q>1$, is greater than the first element of $d^1$ which is 0. Similarly, for the second vector in the sequence, $d^2 =\left[\begin{array}{cc}3\\0\\\end{array}\right]$, we have $k_2 = 2$, as the second element of all the vectors $d^q$ for $q>2$ are greater than the second element of $d^2$, and so on.

\begin{theorem}
 \label{lbound}

Let $\mathcal{D}$ be the set of all distance-vector sequences, D, satisfying the rule given in (\ref{rule}), and $|D^*|=\underset{D \in \mathcal{D}}{\textrm{max}}  \textrm{ $|D|$}$ be the maximum length for such sequences. Then the rank of the controllability matrix, $\Gamma$, satisfies
\begin{equation}
\label{lboundeq}
rank(\Gamma)\geq |D^*|.
\end{equation}
\end{theorem}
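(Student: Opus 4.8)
The plan is to exhibit exactly $|D^*|$ explicit vectors lying in the controllable subspace and to prove that they are linearly independent, reading off their zero/nonzero entries directly from the distance data via Proposition \ref{prop1}. Recall that the controllable subspace of (\ref{sys}) is the column space of $\Gamma$, i.e. the span of the Krylov vectors $\{(-L)^r b_k : r\geq 0,\ k=1,\dots,m\}$; hence it suffices to produce $|D^*|$ linearly independent vectors of this form.

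First I would fix a maximum-length sequence $D^* = (d^1,\dots,d^L)$ with $L=|D^*|$, together with the indices $k_1,\dots,k_L$ furnished by the rule (\ref{rule}). For each $p$ set $r_p = [d^p]_{k_p}$, and let $n_p$ be a node whose distance vector equals $d^p$ (such a node exists, since the entries of $D^*$ are distance vectors of $\mathcal{G}$). I would then attach to $d^p$ the candidate vector $w^p = (-L)^{r_p} b_{k_p}$, which lies in the controllable subspace by construction. I would also observe that the rule forces the $d^p$ to be pairwise distinct: for $q>p$ it gives $[d^q]_{k_p} > [d^p]_{k_p}$, so $d^p \neq d^q$, and consequently the representatives $n_1,\dots,n_L$ may be taken distinct.

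The key step is to evaluate the $n_q$-th entry of $w^p$ using Proposition \ref{prop1}, noting that $d_{n_q,k_p} = [d^q]_{k_p}$. In the diagonal case $q=p$ we have $d_{n_p,k_p} = r_p$, so Proposition \ref{prop1} gives $[w^p]_{n_p} = [\mathcal{A}^{r_p}]_{n_p,k_p}$, which is nonzero because it counts shortest paths in a connected graph. For $q>p$, the rule (\ref{rule}) guarantees $[d^q]_{k_p} > [d^p]_{k_p} = r_p$, that is $r_p < d_{n_q,k_p}$, so the first case of Proposition \ref{prop1} forces $[w^p]_{n_q}=0$.

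To conclude, I would form the $L\times L$ matrix $M$ with $M_{pq} = [w^p]_{n_q}$, obtained by stacking $w^1,\dots,w^L$ as rows and restricting to the columns $n_1,\dots,n_L$. The computation above shows $M$ is lower triangular ($M_{pq}=0$ for $q>p$) with nonzero diagonal, hence invertible; therefore $w^1,\dots,w^L$ are linearly independent, and $rank(\Gamma) \geq L = |D^*|$. I expect the main obstacle to be bookkeeping rather than depth: one must carefully pair the index $k_p$ attached to the $p$-th vector with the correct coordinate of each later distance vector, and verify that the argument only ever invokes the two cases of Proposition \ref{prop1} that are actually computed (never the unspecified regime $r>d_{ik}$). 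This is exactly why the rule (\ref{rule}) is crafted to deliver a triangular—rather than merely diagonal—structure, so that every entry above the diagonal falls into the controlled $r<d_{ik}$ case.
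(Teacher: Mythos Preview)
Your proposal is correct and follows essentially the same approach as the paper: you select the Krylov vectors $(-L)^{r_p}b_{k_p}$ with $r_p=[d^p]_{k_p}$, and use Proposition~\ref{prop1} together with the rule~(\ref{rule}) to obtain a triangular pattern of zero/nonzero entries at the rows indexed by nodes realizing the $d^p$, which is exactly the paper's ``each column has a non-zero entry that none of the preceding columns have'' argument. Your write-up is in fact slightly more explicit than the paper's in two places---you note that the $d^p$ are pairwise distinct so distinct representative nodes $n_p$ exist, and you invoke the Krylov/Cayley--Hamilton description of the controllable subspace rather than the bound $r_p\le n-1$ used in the paper to place the $w^p$ among the columns of $\Gamma$---but these are cosmetic differences, not a different route.
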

\begin{proof}
\newline
For a system with $n$ nodes, the controllability matrix is given as
\begin{equation}
\label{Gamma_M}
 \Gamma = \left[ \begin{array}{ccccc}
 B  & (-L)B  & (-L)^2B & \hdots & (-L)^{n-1}B \end{array}
\right] .
\end{equation}

Now, consider vectors of the form
\begin{equation}
\label{M_form}
 (-L)^{r_p}b_{k_p},
 \end{equation}
where $r_p=[d^p]_{k_p}$, and $b_{k_p}$ denotes the $k_p^{th}$ column of the input matrix $B$. Let $d^p$ be the distance vector of node $i$, i.e. $d^p=d_i$. Then, we have $r_p=[d_i]_{k_p}=d_{ik_p}$, and from Proposition \ref{prop1}, we know that the $i^{th}$ entry of the vector in (\ref{M_form}) is non-zero and equal to $[\mathcal{A}^{r_p}]_{ik_p}$. Also, for any node $j$ with $[d_{j}]_{k_i}>[d_{i}]_{k_i}$ we have the $j^{th}$ entry of this vector equal to zero. Using this along with the  sequence rule depicted in (\ref{rule}), we conclude that the $n \times |D^*|$ matrix 
\begin{equation}
\label{full_rank}
\left[ \begin{array}{ccccc}
  (-L)^{r_1}b_{k_1} &   (-L)^{r_2}b_{k_2}  &\hdots & (-L)^{r_{|D^*|}}b_{k_{|D^*|}} \end{array}
\right],   
\end{equation}
has full column rank since each column has a non-zero entry that none of the preceding columns have. Note that for every $p \in \{1,2,\hdots, |D^*|\}$, we have $r_p=[d^{p}]_{k_p} \leq n-1$ since the distance between any two nodes is always smaller than or equal to $n-1$. Hence, each column of the matrix in (\ref{full_rank}) is also a column of $\Gamma$, and rank of $\Gamma$ is greater than or equal to rank of the matrix in (\ref{full_rank}). Thus, we have $rank(\Gamma) \geq |D^*|$.

\end{proof}

\begin{figure*}[htb]
\begin{center}
\includegraphics[scale=1]{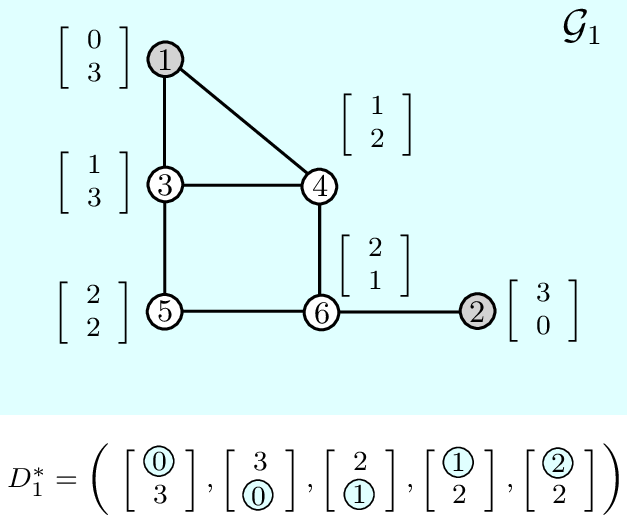}
\includegraphics[scale=1]{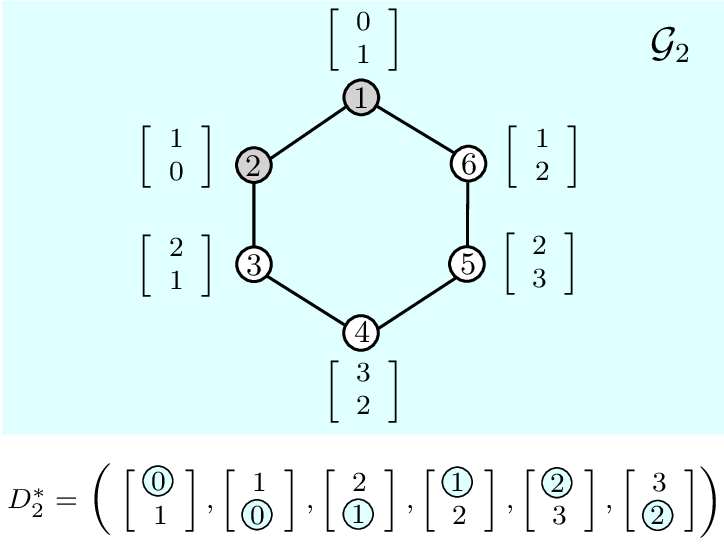}
%\vspace{2in}
\caption{Leader-follower networks, $\mathcal{G}_1$ and $\mathcal{G}_2$, each having two leaders namely, 1 and 2. Each node has its 2-dimensional distance vector (distances to the leaders) shown next to itself. For both networks, sample maximum length sequences, $D_1^*$ and $D_2^*$, satisfying the rule in (\ref{rule}) are given. For each vector in these sequences, the entry corresponding to the index satisfying the rule is circled.} 
\label{lbound_fig}
\end{center}
\end{figure*}

The lower bound presented in Theorem \ref{lbound} is tight and can not be improved for general graphs %. Hence, for general graphs it is not possible to improve this bound%
by only using the distances to the leaders. As a rather simple example, let us consider a network with a single leader. In that case, the distance vectors $d_i$ are one dimensional, hence the longest sequence satisfying the rule in (\ref{rule}) starts with $0$ and monotonically increases to the maximum distance from the leader. The length of this sequence is equal to the maximum distance plus one, which is equal to the number of cells in the distance partition with respect to the leader. Thus, for one dimensional case this lower bound is equal to the one presented in \cite{Zhang11}. A couple of examples with multiple leaders are depicted in Fig. \ref{lbound_fig}. For those networks, the lower bounds on the dimension of the controllable subspaces are computed as $|D_1^*|=5$, and $|D_2^*|=6$, whereas for both systems the actual ranks of the controllability matrices are equal to 6. Note that in general there is not a unique sequence with the maximum possible length, yet we present sample sequences, $D_1^*$ and $D_2^*$, in Fig \ref{lbound_fig}.

By combining the lower bound in Theorem \ref{lbound} and the upper bound in Theorem \ref{upperbound} we obtain the following corollary for the rank of the controllability matrix for any connected leader-follower network with the dynamics given in (\ref{sys}).  
 \begin{cor}
\label{bounds} 
Let $\mathcal{G}=(V,E)$ be a connected network with the dynamics given in (\ref{sys}). Let $|D^*|$ be the maximum sequence length for distance-vector sequences satisfying the rule in (\ref{rule}), and $\pi^*$ be the maximal leader-invariant EEP of $\mathcal{G}$. Then, the rank of the controllability matrix, $\Gamma$, satisfies
\begin{equation}
\label{boundseq}
 |D^*| \leq rank(\Gamma)\leq |\pi^*|.
 \end{equation}
\end{cor}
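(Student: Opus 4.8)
The plan is to obtain this corollary directly by combining the two bounds that have already been established, since the hypotheses of both underlying theorems coincide exactly with the hypotheses stated here. The statement places us in the setting of a connected network $\mathcal{G}=(V,E)$ governed by the dynamics (\ref{sys}), with the leader set being arbitrary (possibly containing multiple leaders). My first step is to verify that this shared hypothesis is precisely what each of the contributing results requires, so that both may be applied to the same system without modification.

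First, I would invoke Theorem \ref{lbound}. Since $\mathcal{G}$ is connected and its dynamics are given by (\ref{sys}), the maximum length $|D^*|$ over all distance-vector sequences $D$ satisfying the rule in (\ref{rule}) is well defined, and Theorem \ref{lbound} yields the left-hand inequality $rank(\Gamma) \geq |D^*|$. Next, I would invoke Theorem \ref{upperbound}. The same connectivity assumption guarantees the existence of a unique maximal leader-invariant EEP $\pi^*$, and Theorem \ref{upperbound} yields the right-hand inequality $rank(\Gamma) \leq |\pi^*|$. Chaining these two inequalities together produces the desired double bound
\begin{equation}
|D^*| \leq rank(\Gamma) \leq |\pi^*|.
\end{equation}

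The one point deserving attention, rather than a genuine obstacle, is consistency of scope: both the lower bound and the upper bound must be valid for the general multi-leader case on the identical graph and input matrix $B$. I would note that Theorem \ref{lbound} is formulated explicitly for an arbitrary number of leaders through the $m$-dimensional distance vectors $d_i$, while the maximal leader-invariant EEP underlying Theorem \ref{upperbound} is likewise defined for any number of leaders (each mapped to a singleton cell). Hence no restriction to the single-leader case is introduced by either quantity, and the two bounds genuinely sandwich the same quantity $rank(\Gamma)$. Because both component results are already proven in the excerpt, there is no remaining computational content; the corollary is purely a concatenation, and the proof is therefore immediate.
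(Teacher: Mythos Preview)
Your proposal is correct and mirrors exactly what the paper does: the corollary is stated immediately after Theorem~\ref{lbound} with the remark that it follows by combining that lower bound with the upper bound of Theorem~\ref{upperbound}, and no separate proof is given. Your additional check that both theorems apply to the same multi-leader system is sound but more than the paper itself spells out.
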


\section{Computing the lower bound}
\label{compute}

In this section we present an algorithm to compute the lower bound mentioned in the Theorem \ref{lbound}. Let $S=\{d_1, d_2,\hdots, d_n \}$ be the set of all distance-from-leaders vectors for a given graph. Given these distance vectors, let us consider a way of iteratively generating vector sequences satisfying the rule in (\ref{rule}). Let $C_p$ be the set of all distance vectors that can be assigned as the $p^{th}$ element of such a sequence $D$. According to these definitions, $C_1=S$. Once a vector from $C_p$ is assigned as the $p^{th}$ element of the sequence, $d^p$, and an index $k_p$ satisfying the sequence rule is chosen, $C_{p+1}$ can be obtained from $C_p$ as

\begin{equation}
\label{C_set}
 {C}_{p+1}= {C}_p \setminus \{d \in {C}_p \mid [d]_{k_p} \leq [d^p]_{k_p}\}.
  \end{equation}

In order to obtain longer sequences, this iteration must be continued until $C_p= \emptyset$.  However, in general there are too many possible sequences that can be obtained this way, and it is not feasible to find the maximum sequence length by searching among all these possibilities. Instead, we present a necessary condition for a sequence satisfying the rule in (\ref{rule}) to have the maximum possible length. This necessary condition significantly lowers the number of sequences that needs to be considered to find the maximum sequence length. 
\begin{prop}
 \label{algorithm}
Let $\mathcal{D}^*$ be a maximum length distance vector sequence satisfying the rule given in ( \ref{rule}), then its $p^{th}$ entry, $d^p$, satisfies 
\begin{equation}
\label{seq_eq}
[d^p]_{k_p}= \underset{d \in {C}_p}{\textrm{min}}  \textrm{ $[d]_{k_p}$}
\end{equation}
\end{prop}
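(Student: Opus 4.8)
The plan is to argue by contradiction using an insertion (exchange-type) argument. Suppose $\mathcal{D}^* = (d^1, d^2, \ldots, d^{|\mathcal{D}^*|})$ is a maximum-length sequence satisfying the rule in (\ref{rule}), with associated witness indices $k_1, k_2, \ldots$ and candidate sets $C_1, C_2, \ldots$ generated as in (\ref{C_set}), and suppose toward a contradiction that at some position $p$ we have $[d^p]_{k_p} > \min_{d \in C_p}[d]_{k_p}$. Let $\tilde{d} \in C_p$ attain this minimum, so $[\tilde{d}]_{k_p} < [d^p]_{k_p}$. I would then show that inserting $\tilde{d}$ immediately before $d^p$, and assigning it the same witness index $k_p$, produces a valid sequence of length $|\mathcal{D}^*| + 1$, contradicting maximality.

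The key observation that makes the insertion work concerns the earlier elements of the sequence. Membership $\tilde{d} \in C_p$ means, by the iterative definition (\ref{C_set}), that $\tilde{d}$ survived every removal step $1, \ldots, p-1$; unwinding this recursion yields $[\tilde{d}]_{k_s} > [d^s]_{k_s}$ for every $s < p$. This is exactly what is needed so that placing $\tilde{d}$ after all of $d^1, \ldots, d^{p-1}$ does not spoil their witness conditions: for each $s < p$, the coordinate $k_s$ still strictly separates $d^s$ from every element that follows it, now including the newly inserted $\tilde{d}$.

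The remaining verification is to confirm the rule in (\ref{rule}) at the other positions of the new sequence. For $\tilde{d}$ itself with witness $k_p$, every element after it, namely $d^p, d^{p+1}, \ldots$, must strictly exceed $[\tilde{d}]_{k_p}$ in coordinate $k_p$; this holds since $[d^p]_{k_p} > [\tilde{d}]_{k_p}$ by the choice of $\tilde{d}$, and $[d^q]_{k_p} > [d^p]_{k_p}$ for $q > p$ by the rule applied to $d^p$ in the original sequence. For $d^p$ and each $d^q$ with $q > p$, the set of elements following them is unchanged, so their original witness indices continue to work. Since the rule automatically forbids repeated vectors (a repeat $d^p = d^{p'}$ with $p < p'$ would contradict the strict inequality demanded by $k_p$), the constructed sequence is a genuine, strictly longer element of $\mathcal{D}$, which is the desired contradiction.

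I expect the main obstacle to be precisely the bookkeeping in the second paragraph: showing that $\tilde{d}$ does not conflict with the already-placed vectors $d^1, \ldots, d^{p-1}$. It is tempting to attempt a pure swap, replacing $d^p$ by $\tilde{d}$, but that only produces an equal-length sequence and does not contradict maximality; it is the insertion that yields a strictly longer sequence, and its validity rests entirely on the survival characterization of $C_p$ supplied by (\ref{C_set}). Once that characterization is in hand, the rule checks at the remaining positions are routine.
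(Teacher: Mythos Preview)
Your proposal is correct and follows essentially the same insertion argument as the paper: assume the minimum is not attained, pick a vector in $C_p$ with strictly smaller $k_p$-coordinate, insert it just before $d^p$ with the same witness index $k_p$, and obtain a longer valid sequence. Your write-up is in fact more careful than the paper's, which asserts without elaboration that the inserted vector ``can be placed right before $d^p$''; you explicitly justify this via the survival characterization of $C_p$ from (\ref{C_set}), which is exactly the right way to fill that gap.
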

\begin{proof}
Assume, for the sake of contradiction, this is not true. Then, there exists a distance vector $d_j \in  C_p$ such that $[d_j]_{k_p}<[d^p]_{k_p}$. Due to the rule of the sequence, $d_j$ can not be added to this sequence after $d^p$.  However, $d_j$ can be placed right before $d^p$ since its index $k_p$ satisfies the rule of the sequence.  Hence, we obtain a longer sequence satisfying the rule by placing $d_j$ right before $d^p$, which leads to the contradiction that $D^*$ does not have the maximum possible length.
\end{proof}

Note that in obtaining the lower bound, we only care about the lengths of sequences, not about their actual entries. Hence, if for any $d_i$, $d_j \in C_p $ we have $[d_i]_{k_p}= [d_j]_{k_p}= \underset{d \in C_p}{\textrm{min}}  \textrm{ $[d]_{k_p}$}$, then we do not care whether $d_i$ or $d_j$ is added to the sequence as $d^p$ since the resulting $C_{p+1}$ will be same as long as $k_p$ is chosen as the index satisfying the rule. Thus, as far as the sequence length is concerned, the only important decision at each step of the sequence generation is the choice of $k_p$. Based on this observation, we present an algorithm that can be used to compute the lower bound.

In this algorithm we define a new variable, $\mathcal{C}$, as the set of all possible non-empty sets $C_p$ that can be obtained at step $p$. Initially this set only includes the set of all the distance vectors, $S$, since there is a unique $C_1$ namely $S$. For each such $C_p$, one can obtain $m$ (number of leaders) different $C_{p+1}$ depending on the choice of $k_p$. Once, these $C_{p+1}$ are computed, we remove all the previous $C_p$ and  store the non-empty $C_{p+1}$ sets in $\mathcal{C}$, and continue the iteration. Iterations stop when $\mathcal{C}=\emptyset$. We keep a counter variable $\ell$ in the algorithm and it is incremented by one every time $\mathcal{C}$ is updated for the next step. Once we reach $\mathcal{C}=\emptyset$, the final value of $\ell$ gives us the maximum possible sequence length, $|D^*|$.

\begin{center}
\begin{tabular}{l l}
\rule[0.08cm]{8.5cm}{0.03cm}\\
\textbf{Algorithm I}\\
\rule[0.08cm]{8.5cm}{0.02cm}\\
\mbox{\small $\;1:\;$}\textbf{initialize:} $\mathcal{C}=\{S\}$ and $\ell=0$ \\
\mbox{\small $\;2:\;$}\textbf{while}\hspace{0.1cm}  $\mathcal{C}\ne \emptyset$ \\

\mbox{\small $\;3:\;$}\hspace{0.45cm} $\bar{\mathcal{C}}=\emptyset$\\

\mbox{\small $\;4:\;$}\hspace{0.45cm} \textbf{for} $i=1$ to $\mid\mathcal{C}\mid$\\

\mbox{\small $\;5:\;$}\hspace{1cm}   \textbf{for} $j=1$ to $m$\\

\mbox{\small $\;6:\;$}\hspace{1.35cm} $\bar{\mathcal{C}}_{(i-1)n_l + j} =  \mathcal{C}_i\setminus \{ d\in \mathcal{C}_i \mid [d]_j = \displaystyle{\min\limits_{d\in \mathcal{C}_i}}\;\; [d]_j$\}\\
\mbox{\small $\;7:\;$}\hspace{1cm} \textbf{end for}\\
\mbox{\small $\;8:\;$}\hspace{0.45cm} \textbf{end for}\\
\mbox{\small $\;9:\;$}\hspace{0.45cm} $\bar{\mathcal{C}} = \bar{\mathcal{C}} \setminus \{ C\in \bar{\mathcal{C}} \mid \;C=\emptyset$\}\\
\mbox{\small $10:\;$}\hspace{0.45cm} $\mathcal{C} = \bar{\mathcal{C}}$\\
\mbox{\small $11:\;$}\hspace{0.45cm} $\ell =\ell + 1$\\
\mbox{\small $12:\;$}\textbf{end while}\\
\mbox{\small $13:\;$}\textbf{return} \hspace{0.1cm} $\ell$\\
\rule[0.08cm]{8.5cm}{0.02cm}\\
\end{tabular}
\end{center}

For instance, consider the network $\mathcal{G}_1$ with two leaders shown in Fig. \ref{lbound_fig}. We can represent the flow of Algorithm I as a tree structure shown in the Fig. \ref{fig:Final12}. In this tree diagram, each node at a given level $p$ corresponds to an element of $\bar{\mathcal{C}}$ that is computed in the line 6 of Algorithm I in the $p^{th}$ iteration of the while loop. Algorithm will terminate after the fifth iteration of the while loop as all those $\bar{\mathcal{C}_i}$s will be empty sets.

\begin{figure*}[htb]
\begin{center}
\includegraphics[scale=0.9]{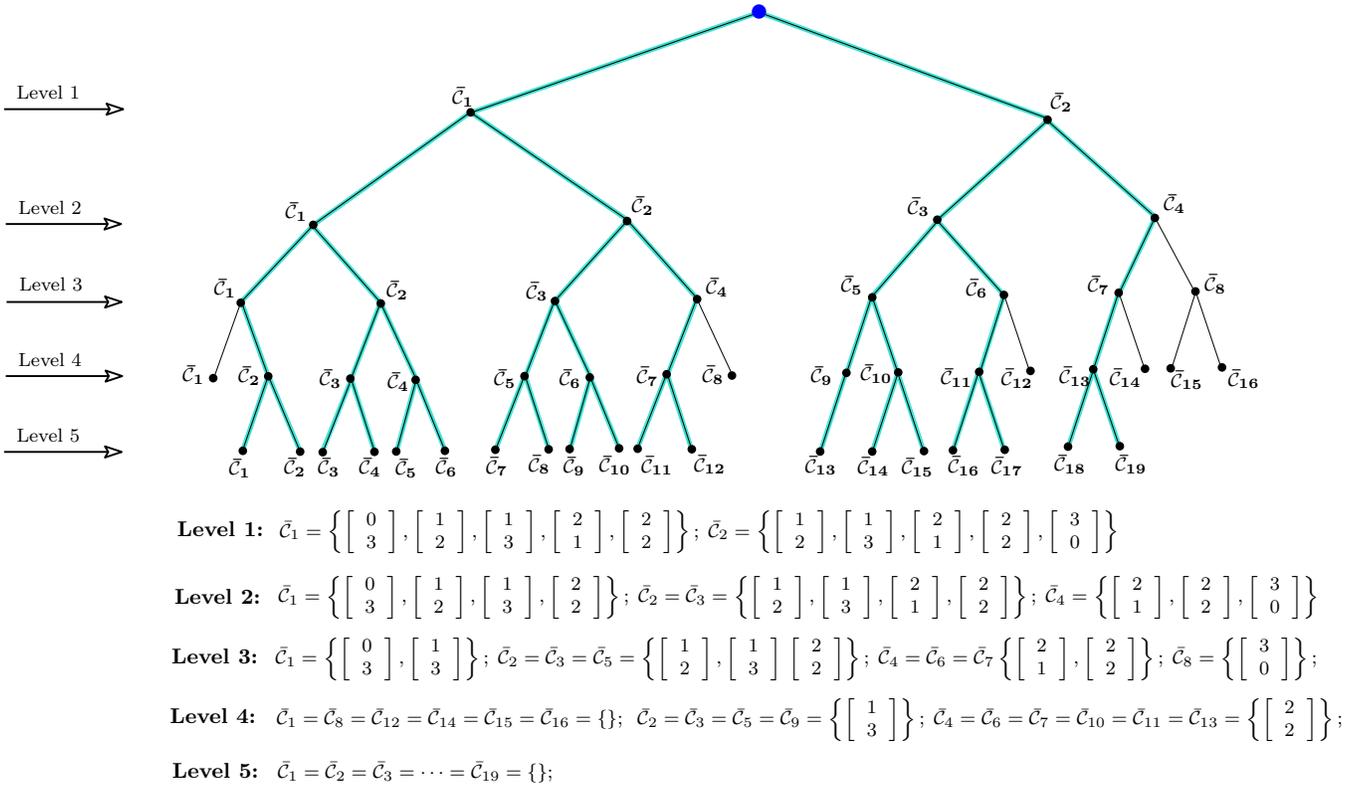}

\caption[Text excluding the matrix]{A tree representation for the flow of Algorithm I for the system $\mathcal{G}_1$ in Fig. \ref{lbound_fig}. Each time the \textit{while} loop is completed, the algorithm moves to the next level. Each node at the same level represents a particular $\bar{\mathcal{C}}_i$ computed in line 6 of the algorithm in the corresponding iteration of the while loop. The right child of a node $\bar{\mathcal{C}}_i$ corresponds to $\bar{\mathcal{C}}_i\setminus \{ d\in \bar{\mathcal{C}}_i \mid [d]_1 = \displaystyle{\min\limits_{d\in \bar{\mathcal{C}}_i}}\;\; [d]_1\}$, whereas, the left child corresponds to $\bar{\mathcal{C}}_i\setminus \{ d\in \bar{\mathcal{C}}_i \mid [d]_2 = \displaystyle{\min\limits_{d\in \bar{\mathcal{C}}_i}}\;\; [d]_2\}$. For example, $\bar{\mathcal{C}}_4$ at level 3 is a right child of $\bar{\mathcal{C}}_2$ at level 2, and is obtained by deleting all the vectors with the minimum first index from $\bar{\mathcal{C}}_2$ at level 2, namely $\left[\begin{array}{c} 1 \\ 2 \end{array}\right]$ and $\left[\begin{array}{c} 1 \\ 3 \end{array}\right]$. All $\bar{\mathcal{C}}_i$ are explicitly given below the tree diagram. In the fifth iteration of the while loop, each computed $\bar{\mathcal{C}}_i$ is an empty set and the algorithm terminates. The number of levels in the tree, stored in the variable $\ell$, corresponds to the required lower bound.}

\label{fig:Final12}
\end{center}
\end{figure*}

\section{Conclusion}
\label{conclusion}
In this paper we presented a graph theoretic analysis on the controllability of leader-follower networks with possibly multiple leaders. In particular, we presented a tight topological lower bound on the rank of the controllability matrix of such systems with arbitrary interaction graphs. This lower bound is based on the distances of nodes from the leaders. We also presented an algorithm to compute this lower bound for any leader-follower network. This lower bound may find its applications in various problems such as selecting leaders in a network that are sufficient to establish a certain level of controllability.

\end{document}